\documentclass[12pt]{article}
\usepackage[T2A]{fontenc}
\usepackage[cp1251]{inputenc}
\usepackage[english]{babel}
\usepackage{amssymb}
\usepackage{amsmath}
\usepackage{amsthm}
\usepackage{amsfonts}
\usepackage{amssymb}

\newtheorem{theorem}{Theorem}

\newtheorem{lemma}{Lemma}

\newtheorem{proposition}{Proposition}

\newtheorem{definition}{Definition}

\newtheorem{example}{Example}
\newtheorem{remark}{Remark}


\textwidth=150mm\hoffset=9mm \textheight=242mm \voffset=-20mm
\marginparwidth=0pt \marginparpush=0pt \marginparsep=0pt
\oddsidemargin=0mm \evensidemargin=-20mm \flushbottom

\begin{document}

\begin{center}
\textbf{\uppercase{Levy Laplacians and instantons on manifolds}}

B.~O.~Volkov

borisvolkov1986@gmail.com

Steklov Mathematical Institute,
           
ul. Gubkina, 8, Moscow, 119991, Russia

\end{center}

Dedicated to the memory of Alexander A. Belyaev

\textbf{Abstract}: 
 The equivalence of the anti-selfduality   Yang-Mills equations  on the $4$-dimensional orientable Riemannian manifold and  Laplace equations for some  infinite dimensional Laplacians is proved. A class of modificated L\'evy Laplacians parameterized  by the choice of a curve in the group $SO(4)$ is introduced.
It is shown that a connection  is an instanton (a solution of the anti-selfduality   Yang-Mills equations)  if and only if the  parallel transport generalized by this connection is a solution of the 
Laplace equations for some three modificated  Levy Laplacians from this class.

2010 Mathematics Subject Classification: 70S15,58J35

key words: Levy Laplacian, Yang--Mills equations, instantons,  infinite-dimensional manifold 

\section*{Intoduction}
One of the main causes of the interest in  infinite dimensional differential operators of the Levy type  is their connection 
with the Yang-Mills fields.
The Yang-Mills equations on a connection $A$ in the vector bundle over $d$-dimensional  orientable Riemannian manifold $M$  are
\begin{equation}
\label{ymint}
D_A^\ast F=0,
\end{equation}
where $F=dA+A\wedge A$ is the curvature of $A$  and $D_A^\ast$ is the adjoint operator to the exterior covariant derivative. The Yang-Mills fields are solutions of the Yang-Mills equations.
The parallel transport $U$ generated by the connection $A$ can been considered as a section in some vector bundle over
the Hilbert manifold of $H^1$-curves with the fixed origin  in $M$ (if $M=\mathbb{R}^d$ the parallel transport can been considered  as an operator-valued function  on  the  Hilbert space  of $H^1$-curves with the origin at zero).
The theorem proved by Accardi, Gibilisco and Volovich in~\cite{AGV1994} states that the connection $A$ in a vector bundle over $\mathbb{R}^d$ satisfies the Yang-Mills equations if and only if
$$\Delta_L^{AGV}U=0,$$ where $\Delta_L^{AGV}$ is some infinite dimensional Laplacian. This Laplacian was defined by analogy with the famous Levy Laplacian  $\Delta_L$ (see~\cite{L1951}) and was also called the same name.  Accardi-Gibilisco-Volovich  theorem was generalized for Riemannian manifolds by Leandre and Volovich in~\cite{LV2001}.

In the case $d=4$, the Hodge dual $\ast$ transforms  2-forms on $M$ into 2-forms. So it is possible to consider 
the selfduality equations 
\begin{equation}
\label{dualintro}
F=\ast F
\end{equation} or anti-selfduality   equations 
\begin{equation}
\label{antidualintro}
F=-\ast F
\end{equation}
 on a connection $A$.
A connection
 is called  an instanton or  an anti-instanton if it is a solution of equations~(\ref{antidualintro}) or~(\ref{dualintro}) respectively (see~\cite{Sergeev}). Any connection $A$ satisfies the Bianchi identities $D_AF=0$. Due to $D_A^\ast=-\ast D_A \ast$,  instantons and
anti-instantons are solutions of the Yang--Mills equations~(\ref{ymint}). In the current paper,
the family of modificated  Levy Laplacians  is introduced.  It is shown that the connection satisfies the anti-selfduality Yang-Mills equations on
 a $4$-dimensional orientable Riemannian manifold if and only if the  parallel transport satisfies the Laplace equations for three operators from this family. In the fact, the problem of the description of instantons in the terms of the parallel transport
 and the Levy Laplacians was stated  by Accardi in~\cite{Accardi1995}. So, in the current paper, this problem is solved for the Riemannian case. For the flat case it was solved by author in~\cite{VolkovLLI}.

 The following scheme from~\cite{ASF} can been useful for the definition of  differential operators 
particularly for the definition of the modificated Levy Laplacians.
Let $E$ be a real normed vector space and $E^\ast$ be its conjugate. Let $\mathcal{L}(E,E^\ast)$ be the space of  all linear continuous operators from $E$ to $E^\ast$. 
If $f\in C^2(E,\mathbb{R})$, then $f'(x)\in E^\ast$ and $f''(x)\in \mathcal{L}(E,E^\ast)$ for any $x\in E$.
 Let  $S\colon dom\,S\to \mathbb{R}$ be a linear functional and  $dom\,S\subset \mathcal{L}(E,E^\ast)$.  The functional $S$ defines the second order differential operator $D^{2,S}$ by the formula
 \begin{equation}
D^{2,S}f(x)=S(f''(x)).
\end{equation}
If we choose $E=\mathbb{R}^d$ and $S=tr$, then
 $D^{2,tr}$ is the Laplace operator $\Delta$. 
The original Levy Laplacian on the functions on  $L_2[0,1]$  was introduced by P.~Levy (see~\cite{L1951}). It can been 
defined   as the second order differential operator $D^{2,tr_L}$ associated with  the so called  the Levy trace  $tr_L$  (see~\cite{AS1993}). 
The Levy trace is a linear functional defined in the following way.
Let  $\mathcal{K}_{comp}$  be the ring of compact operators on $L_2[0,1]$ and $\mathcal{A}_{mult}$ be the algebra  of operators of multiplication on functions from
$L_{\infty}[0,1]$.
Let $\mathcal {A}=\mathcal{K}_{comp}\oplus \mathcal{A}_{mult}$.
We will identify  $h\in L_{\infty}[0,1]$ with the operator of multiplication  on $h$.
Then the L\'evy trace $tr_L$ is a linear functional on  $\mathcal A$ defined by
\begin{equation}
\label{10001}
tr_L K=0,
\end{equation}
if $K\in\mathcal{K}_{comp}$, 
and by
\begin{equation}
\label{10002}
tr_L h=\int_0^1h(t)dt,
\end{equation}
if $h\in L_{\infty}[0,1]$.~\footnote{This  linear functional defines a singular quantum  state and is well studied (see for example~\cite{SV,SV2}).}

The Levy Laplacian $\Delta^{AGV}_L$  that was introduced in the papers~\cite{AGV1993, AGV1994} by Accardi, Gibilisco and Volovich can been  associated
with the linear functional $tr_L^{AGV}$  on some special space  of bilinear forms on $H^1([0,1],\mathbb{R}^d)$ (see also~\cite{Volkov2019}). This linear functional is more complicated analogue of the Levy trace $tr_L$ and we will call it the same name.
 The tangent bundle over the Hilbert manifold of the $H^1$-curves with the fixed origin  in the Riemannian manifold $M$ is trivial. It allows to transfer  the scheme of the definition of the second order differential operators  to the space  of sections in the vector bundle over this Hilbert manifold. In this case, the linear functional $tr_L^{AGV}$ defines the Levy Laplacian that was used by Leandre and Volovich  in~\cite{LV2001}.

Any smooth curve $W\in C^1([0,1],SO(4))$ defines an orthogonal operator 
in $L_2([0,1],\mathbb{R}^4)$ by pointwise left multiplication. The subspace  $H^1([0,1],\mathbb{R}^4)\subset L_2([0,1],\mathbb{R}^4)$ is invariant under the action of $W$.  The modificated Levy trace associated with $W\in C^1([0,1],SO(4))$ acts on billinear form $K$ on  $H^1([0,1],\mathbb{R}^4)$ by formula
\begin{equation*}
tr^W_L K=tr_L^{AGV}W^\ast K W.
\end{equation*}
If $W$ is not constant, the modificated Levy trace $tr^W_L$ does not coincide with the Levy trace  $tr_L^{AGV}$. So the Levy trace has not some properties of an usual trace. The modificated Levy Laplacian $\Delta^W_L$ is the second order derivative operator  associated with $tr_L^{W}$.

The  group $SO(4)$ is not simple and has the normal subgroups $S^3_L\cong SU(2)$ and $S^3_R\cong SU(2)$.
The Lie algebra $so(4)=Lie(S^3_L)\oplus Lie(S^3_R)$  and this
corresponds to decomposition of the space of 2-forms into the direct sum of the space of selfdual and anti-selfdual 2-forms.
In the paper, we show that a connection $A$ is an instanton  (antiinstanton) on a $4$-dimensional orientable Riemannian manifold  if and only if $\Delta^W_LU=0$ for any $W\in C^1([0,1],S^3_L)$ (for any $W\in C^1([0,1],S^3_R)$).
Let $\{\bf{e_1},\bf{e_2},\bf{e_3}\}$ be a some  basis of the Lie algebra of  $S^3_L$.
Let $W_i(t)=e^{t\bf{e_i}}$ for $i\in \{1,2,3\}$.
We prove that it is sufficient to check $\Delta^{W_i}_LU_{1,0}=0$ to show that the connection $A$ is an instanton.

The modificated Levy Laplacians were introduced in the work~\cite{VolkovLLI} by author, where only the flat case and instantons over $\mathbb{R}^4$ were considered. In that case,  it is possible to use only one Laplace equation
for some modificated  Levy Laplacian instead of three of them. In~\cite{VolkovLLI},   the sufficient conditions on a smooth curve $W\in C^1([0,1],S^3_R)$
that 
the equality  $\Delta_L^WU=0$ implies that the connection $A$ is an instanton were found. 
In the proof, the fact  that $\mathbb{R}^4$  is not compact was essentially used. So it is the open question whether it is possible 
 to transfer the result of~\cite{VolkovLLI} for an arbitrary    4-dimensional orientable Riemannian  manifold.
The simple Abelian case for a 4-dimensional orientable Riemannian  manifold was considered in~\cite{Volkov2016}.

Another approach to the definition of the Levy Laplacian is to define it  as the Cesaro mean of the second order directional derivatives 
along the vectors of some orthonormal basis (see~\cite{L1951,KOS}).
This approach can be also useful in the connection with the Yang-Mills equations (see~\cite{Volkovdiss,VolkovD,Volkov2017,Volkov2019,Volkov2019a}) and instantons (see~\cite{VolkovLLI,VolkovVINITI}).
Different approaches to the Yang-Mills equations  based on the parallel transport but not based on the Levy Laplacian
were used in~\cite{Gross,Driver,Bauer1998,Bauer2003,ABT,AT}. Particularly,  instantons were studied in~\cite{Bauer1998,Bauer2003,ABT}.

The paper is organized as follows. In Sec.~\ref{Sec1}, we give preliminary information about the Yang-Mills equations and instantons on  4-dimensional orientable Riemannian   manifolds. In Sec.~\ref{Sec2}, we give preliminary information about the parallel transport.  We consider it as  a section in some infinite-dimensional vector bundle over the Hilbert manifold of $H^1$-curves with the fixed origin.
In Sec.~\ref{Sec3}, we transfer the scheme of  the definition of the second order derivative operators on  the space of sections in this infinite dimensional bundle. In Sec.~\ref{Sec4}, we define the modificated Levy trace as the result of the action of the curve from $C^1([0,1],SO(4))$ on the Levy trace. We define the modificated  Levy Laplacian as the second order derivative operator associated with the modificated Levy trace. In Sec.~\ref{Sec5},  we find the value of the modificated Levy Laplacian on the parallel 
transport. In Sec.~\ref{Sec6}, we prove the main theorem on the equivalence of the self-duality   Yang-Mills equations and 
the Laplace equations for the modificated Levy Laplacians.

\section{Instantons on manifold}
\label{Sec1}
In the paper, all manifolds are finite or  infinite dimensional Hilbert manifolds. 
In the infinite-dimensional case, all derivatives are  understood in the Frechet sense and the  symbol $d_X$ will denote the derivative in the  direction $X$.
For  information about  the infinite  dimensional geometry see~\cite{Lang,Klingenberg,Klingenberg2}.

Let $M$ be a smooth orientable Riemannian 4-dimensional manifold. Let $g$ denote the Riemannian metric on $M$.
We will raise and lower indices using this  metric  and we will sum over repeated indices.
Let $G\subseteq  SU(N)$ be a closed Lie group and $Lie (G)\subseteq su(N)$ be its Lie algebra.
Let $E=E(\mathbb{C}^N,\pi,M,G)$  be a vector bundle over $M$ with the projection $\pi\colon E\to M$, the fiber $\mathbb{C}^N$ 
and the structure group $G$. We will denote the fiber  $\pi^{-1}(x)\cong\mathbb{C}^N$ over $x\in M$ by the symbol $E_x$. Let $P$ be the principle bundle over $M$  associated with $E$.
 Let $ad (P)=Lie(G)\times_G M$ and $aut P=G\times_G M$ be the adjoint and automorphism bundles of $P$ respectively (the fiber of $adP$ is
  isomorphic to $Lie (G)$ and the fiver of $aut P$ is isomorphic to $G$)

A connection $A$ in the vector bundle $E$ is  a smooth section in $\Lambda^1\otimes adP$. (The symbol $\Lambda^p$ denotes the bundle of exterior
$p$-forms.)
If $W_a$ is an open subset of $M$ and  $\psi_a\colon \pi^{-1}(W_a)=W_a\times \mathbb{C}^N$ 
is a local trivialization of $E$, then, in this local trivialization, the connection $A$  is a smooth $Lie(G)$-valued 1-form $A^a(x)=A^a_\mu(x)dx^\mu=\psi_{a}A(x)\psi_{a}^{-1}$ on $W_a$.
Let $\psi_a\colon\pi^{-1}(W_a)\cong W_a\times \mathbb{C}^N$ and $\psi_b\colon\pi^{-1}(W_b)\cong W_b\times \mathbb{C}^N$ be two local trivializations of $E$ and $\psi_{ab}\colon W_a\cap W_b\to G$ be the transition function, i.e. it is  the function such that $\psi_{a}\circ \psi_{b}^{-1}(x,\xi)=(x,\psi_{a b}(x)\xi)$ for all $(x,\xi)\in (W_a\cap W_b)\times \mathbb{C}^N$.
If $x\in W_a\cap W_b$, then
\begin{equation}
\label{connection}
A^b(x)=\psi_{ab}^{-1}(x)A^a(x)\psi_{ab}(x)+\psi_{ab}^{-1}(x)d\psi_{ab}(x).
\end{equation}
 The curvature $F$  of the connection $A$ is a smooth section in $\Lambda^2\otimes ad P$ such that, in the local trivialization, it has the form  $F^a(x)=\sum_{\mu<\nu} F^a_{\mu\nu}(x)dx^\mu\wedge dx^\nu=\psi_{a}F(x)\psi_{a}^{-1}$,
where  $F^a_{\mu\nu}=\partial_\mu A^a_\nu-\partial_\nu A^a_\mu+[A^a_\mu,A^a_\nu]$.
If $x\in W_a\cap W_b$, then
\begin{equation}
\label{curvature}
F^b(x)=\psi_{ab}^{-1}(x)F^a(x)\psi_{ab}(x).
\end{equation}
 If $\phi$ is a smooth section in $ad P$, its
covariant derivative is defined as
$$
\nabla_\mu\phi=\partial_\mu\phi+[A_\mu,\phi].
$$
Also the following holds
$$
(\nabla_\mu\nabla_\nu-\nabla_\nu\nabla_\mu)\phi=[F_{\mu\nu},\phi].
$$

Let $D_A\colon C^\infty(M,\Lambda^p\otimes adP)\to C^\infty(M,\Lambda^{p+1}\otimes adP)$ be the operator of the exterior covariant derivative. 
It is determinated by its action on  forms $\alpha\otimes \phi$, where $\alpha$ is  a real $p$-form and
$\phi$ is a section in $adP$, by the formula
$$
D_A(\alpha\otimes \phi)=d\alpha\otimes \phi+(-1)^{p}\alpha\otimes \nabla \phi,
$$
where $d$ denotes the operator of the usual exterior derivative.
Let $D_A^{\ast}:C^\infty(M,\Lambda^{p+1}\otimes adP) \to C^\infty(M,\Lambda^{p}\otimes adP)$ be a formally adjoint to the operator $D_A$.
We have $D_A^{\ast}=-\ast D_A\ast$, where $\ast$ is the Hodge star on the manifold $M$.

The Yang--Mills action functional has the form
\begin{equation}
\label{YMaction1}
S_{YM}(A)=-\frac 12\int_{M}tr(F_{\mu\nu}(x)F^{\mu\nu}(x))Vol(dx),
\end{equation}
where $Vol$ is the Riemannian volume measure on the manifold $M$.
The Yang--Mills equations  on a connection $A$ have the form
\begin{equation}
\label{YMequations}
(D_A^\ast F)=0.
\end{equation}
In local coordinates, we have
$$
(D_A^\ast F)_\nu=-\nabla^\mu F_{\mu\nu}
$$
and
 \begin{equation*}
 \nabla_\lambda F_{\mu
\nu}=\partial_\lambda F_{\mu \nu}+[A_\lambda,F_{\mu
\nu}]-F_{\mu
\kappa}\Gamma^\kappa_{\lambda\nu}-F_{\kappa\nu}\Gamma^\kappa_{\lambda\mu},
\end{equation*}
where $\Gamma^\kappa_{\lambda\nu}$  are the Christoffel symbols of the Levy-Civita connection on $M$.
The Yang--Mills equations are the Euler-Lagrange equations for the Yang--Mills action functional~(\ref{YMaction1}).

The Hodge star acts on the curvature in the following way. If $\varepsilon_{\mu\nu\lambda\kappa}$ is the Levi-Civita symbol, then $(*F)_{\mu\nu}=\frac{\sqrt{|\det g|}}{2}\varepsilon_{\mu\nu\lambda\kappa}F^{\lambda\kappa}$.
The selfduality (anti-selfduality equations) are following equations on the connection $A$:
\begin{equation} 
\label{autodual}
F=\ast  F \,\,(F=-\ast F).
\end{equation}
 Let $F_-=F-\ast F$ and $F_+=F+\ast F$ be anti-selfdual and selfdual parts of the curvature $F$ respectively.
 The selfduality (anti-selfduality equations) can been rewritten
 \begin{equation} 
\label{autodual}
F_-=0\,(F_+=0).
\end{equation}
If a connection is a solution of the  self-duality equations or the antiself-duality equations than it is called the antiinstanton
or the instanton respectively. 
The instantons and the antiinstantons are local extrema of the Yang--Mills action functional~(\ref{YMaction1}).

The gauge transform is a smooth section in $Aut P$. Such a section $\psi$ acts on the connection by the formula
\begin{equation}
A\to A'=\psi^{-1}A\psi+\psi^{-1}d\psi
\end{equation}
and on the curvature by the formula
\begin{equation}
F\to F'=\psi^{-1}F\psi
\end{equation}
The Lagrange function of~(\ref{YMaction1}), the Yang--Mills equations~(\ref{YMequations}), the self-duality equations and the antiself-duality equations are invariant under the action of gauge transform.
The moduli space of instantons is the factor space 
 of all instantons  with the respect to the gauge equivalence. 
The moduli space of instantons over  $\mathbb{R}^4$  was described in~\cite{ADHM}. The moduli space of instantons over a 4-dimensional oriented Riemannian compact manifold was described in~\cite{Taubes}. If the intersection form on the manifold is positive then on this manifold  there exist  solutions of the self-dual Yang--Mills equations and the moduli space of instantons is a 5-dimensional manifold (see also~\cite{FU}). In the case of the self-dual base manifold instantons were described in~\cite{AHS}.
The review on the gauge fields and the instantons can been found in~\cite{Sergeev}.

\section{Parallel transport}  
\label{Sec2}

For any sub-interval $I\subset[0,1]$ let the symbol   $H^1(I,\mathbb{R}^4)$ denote the space of
 all $H^1$-functions  on $I$ with values in $\mathbb{R}^4$. 
 It is the Hilbert space with scalar product
   $$(h_1,h_2)_1=\int_I(h_1(t),h_2(t))_{\mathbb{R}^4}dt+\int_I(\dot{h}_1(t),\dot{h}_2(t))_{\mathbb{R}^4}dt.$$
Let $H_0^1=\{h\in H^1([0,1],\mathbb{R}^4)\colon h(0)=0)$
and $H_{0,0}^1=\{h\in H_0^1\colon h(1)=0\}$.

The curve $\gamma\colon [0,1]\to M$  on the manifold $M$ is called $H^1$-curve, if $\phi_a\circ \gamma\mid_I\in H^1(I,\mathbb{R}^4)$
 for any interval  $I\subset [0,1]$ and for any  coordinate chart
 $(\phi_a, W_a)$ of the manifold  $M$ such that  $\gamma(I)\subset W_a$.
 Let $\Omega$ be the set of all $H^1$-curves in $M$. If $m\in M$ let $\Omega_m=\{\gamma\in \Omega\colon \gamma(0)=m\}$. So $\Omega_m$ is the set of all $H^1$-curves in $M$ with the origin at $m\in M$.
The sets $\Omega$ and  $\Omega_m$ can be endowed with the structure of an infinite dimensional Hilbert manifold  (see~\cite{Driver,Klingenberg,Klingenberg2,Volkov2019a}).

 Let $\mathcal E$ and $\mathcal E_m$ be the vector bundles over  $\Omega$ and $\Omega_m$ respectively, which  fiber  over $\gamma\in \Omega$ (over $\gamma\in \Omega_m$) is the space $\mathcal{L}(E_{\gamma(0)},E_{\gamma(1)})$ of all linear mappings from $E_{\gamma(0)}$ to $E_{\gamma(1)}$.  The parallel 
  transport  generated by the connection $A$ in $E$ can been considered as a section in  $\mathcal E$. 
 Let $\psi_a\colon \pi^{-1}(W_a)\cong W_a\times \mathbb{C}^N$ be a local trivialization of the vector bundle $E$. For $\gamma\in \Omega$ such that $\gamma([s_0,t_0])\subset  W_a$ let $U^a_{t,s}$, where $s_0\leq s\leq t\leq t_0$,  be a solution of the system of differential equations
\begin{equation}
\label{partransp1}
 \left\{
\begin{aligned}
 \frac d{dt}U^a_{t,s}(\gamma)=-A^a_\mu(\gamma(t))\dot{\gamma}^\mu(t)U^a_{t,s}(\gamma)\\
  \frac d{ds}U^a_{t,s}(\gamma)=U^a_{t,s}(\gamma)A^a_\mu(\gamma(s))\dot{\gamma}^\mu(s)\\
\left.U^a_{t,s}\right|_{t=s}=Id.
\end{aligned}
\right.
\end{equation}
Then $U_{t_0,s_0}(\gamma)=\psi^{-1}_a U^a_{t_0,s_0}(\gamma)\psi_a$ is the parallel transport along  the restriction of $\gamma$ on $[s_0,t_0]$.
If $\gamma([s_0,t_0])\in W_a\cap W_b$,  then equality~(\ref{connection})
implies that 
\begin{equation}
\label{partransp}
U^a_{t_0,s_0}(\gamma)=\psi_{ab}(\gamma(t_0))U^b_{t_0,s_0}(\gamma)\psi_{ba}(\gamma(s_0)).
\end{equation}
For an arbitrary $\gamma\in \Omega$ we can consider a partition $s_0=t_1\leq t_2\leq\ldots t_n=t_0$ such that
$\gamma([t_i,t_{i+1}])\subset W_{a_i}$  and a family of local trivializations $\psi_{a_i} \colon  \pi^{-1}(W_{a_i})\cong   W_{a_i}\times \mathbb{C}^N$ of the vector bundle  $E$.
 Let
\begin{equation}
\label{paraltrans}
U^{a_{n},a_{1}}_{t_0,s_0}(\gamma)=U_{t_{n},t_{n-1}}^{a_{n-1}}
(\gamma)\psi_{a_n a_{n-1}}(\gamma(t_{n-1}))\ldots
U_{t_{3},t_{2}}^{a_{2}}(\gamma)\psi_{a_{2}a_{1}}(\gamma(t_{2}))U_{t_{2},t_{1}}^{a_{1}}(\gamma).
\end{equation}
then $U_{t_0,s_0}(\gamma)=\psi^{-1}_{a_n} U^{a_n,a_1}_{t_0,s_0}(\gamma)\psi_{a_1}$. The parallel transport along $\gamma$ is $U_{1,0}(\gamma)$ .
By~(\ref{partransp}), the definition of parallel transport does not depend on the choice of the partition and the choice 
of the family of trivializations.

The parallel transport has the following properties:
\begin{enumerate}
\item The mapping $\Omega\ni \gamma\to U_{1,0}(\gamma)$ is a $C^{\infty}$-smooth section in the vector bundle $\mathcal E$ (for the proof of smoothness see~\cite{Gross,Driver})  .
\item The parallel transport does not depend on the choice of parametrization of the curve.
Let $\sigma\colon[0,1]\to [0,1]$ be a
non-decreasing piecewise $C^1$-smooth function such that $\sigma(0)=0$ and $\sigma(1)=1$. 
Then
\begin{equation}
\label{reparametr}
U_{\sigma(t),\sigma(s)}(\gamma)=U_{t,s}(\gamma\circ\sigma)
\end{equation}
for any $\gamma\in \Omega$.
\item For any $\gamma \in \Omega$ the  parallel transport satisfies the multiplicative property:
\begin{equation}
\label{group}
U_{t,s}(\gamma)U_{s,r}(\gamma)=U_{t,r}(\gamma)\text{ for $r\leq s\leq t$}.
\end{equation}
\item If the restriction of $\gamma\in \Omega$ on $[s,t]$ is constant, then
\begin{equation}
\label{indent}
U_{t,s}(\gamma)\equiv Id.
\end{equation}
\end{enumerate}

\section{Second order directional operators}
\label{Sec3}

The mapping  $X\colon [0,1]\to TM$ such that   $X(t)\in T_{\gamma(t)}M$ for any $t\in[0,1]$ is a vector field along  $\gamma\in \Omega_m$, i.e. it is a section in the pullback bundle $\gamma^{*}TM$.  If $X$ is an absolutely continuous field along  $\gamma\in \Omega$, 
its covariant derivative $\nabla X$ with respect to the Levi-Civita connection is the field along $\gamma$
defined by 
\begin{equation}
\label{covder}
\nabla X(t)= \dot{X}(t)+\Gamma(\gamma(t))(X(t),\dot{\gamma}(t)),
\end{equation}
where $(\Gamma(x)(X,Y))^\mu=\Gamma^\mu_{\lambda \nu}(x)X^\lambda Y^\nu$ in local coordinates.
Let $Q(\gamma)$ denote the parallel transport generated by the Levi-Civita connection  along the curve  $\gamma$. 
Then
$$
\nabla{X}(t)=Q_{t,0}(\gamma)\frac d{dt}(Q_{t,0}(\gamma)^{-1}X(t)).
$$
The symbol 
$H^1_\gamma(TM)$ denotes the Hilbert space of all $H^1$-fields $X$ along  $\gamma$ such that $X(0)=0$. The scalar product 
on this space is defined by the formula
\begin{equation}
\label{metrG1}
G_1(X,Y)=\int_0^1g(X(t),Y(t))dt
\\+\int_0^1g(\nabla X(t),\nabla Y(t))dt.
\end{equation}
We can identify the  Hilbert spaces $H^1([0,1],\mathbb{R}^4)$ and $H^1([0,1],T_mM)$. 
Let $\{Z_1,\ldots,Z_4\}$ be an orthonormal basis in $T_mM$.
We identify
$$
H^1([0,1],\mathbb{R}^4)\ni h(\cdot)=(h^\mu(\cdot))\leftrightarrow Z_\mu h^\mu(\cdot)\in H^1([0,1],T_mM)
$$
Due to~(\ref{covder}), for any $\gamma\in \Omega_m$ the Levy-Civita connection generates the canonical isometrical isomorphism  between
$H^1_0$ and $H^1_\gamma(TM)$, which action on $h\in H^1_0$ we will denote by $\widetilde{h}$. This isomorphism acts   by the formula
\begin{equation}
\label{isomorhism}
\widetilde{h}(\gamma;t)=Q_{t,0}(\gamma)h(t)=Z_\mu(\gamma,t)h^\mu(t),
\end{equation}
where $Z_i(\gamma,t)=Q_{t,0}(\gamma)Z_i$ for $i\in \{1,2,3,4\}$. Sometimes we will miss in the notation the dependence of the infinite dimensional field on $\gamma$.
Let  $\mathcal H^1_0$ be the vector bundle over $\Omega_m$ which fiber over $\gamma\in \Omega_m$ is $H^1_\gamma(TM)$.
 Let $\mathcal H^1_{0,0}$ denote the sub-bundle of $\mathcal H^1_0$ such that
 the  fiber of  $\mathcal H^1_{0,0}$  over  $\gamma\in\Omega_m$ is the space   $\{X\in H^1_\gamma(TM)\colon X(1)=0\}$.
The vector bundle $\mathcal H^1_0$ is the tangent bundle over $\Omega_m$. Due to isomorphism~(\ref{isomorhism}), this bundle is trivial. 
For any smooth section $f$ in  $\mathcal E_m$ there exists the section  $\widetilde{D}f$ in  $\mathcal E_m \otimes H^1_{0,0}\cong \mathcal H^1_{0,0}$ such that
$d_{\widetilde{h}}f(\gamma)=<\widetilde{D}f(\gamma),h>$ for any $h\in H^1_{0,0}$. Also there exists the section  $D^2f$ in $\mathcal E_m \otimes \mathcal{L}(H^1_{0,0},H^1_{0,0})$ such that
$<d_{\widetilde{h_1}}\widetilde{D}f(\gamma),h_2>=<\widetilde{D}^2f(\gamma)h_1,h_2>$ for any $h_1,h_2\in H^1_{0,0}$. 

The scheme of the definition of the second order differential operator can been transferred to the case of manifold in the following way.
\begin{definition}
Let $S$ be  a linear functional  on $domS\subset  \mathcal{L}(H^1_{0,0}, H^1_{0,0})$.
The domain of the second order differential operator $D^{2,S}$ associated with $S$   is the space of  all smooth sections $f$ in $\mathcal E_m$ such that $\widetilde{D}^2f(\gamma)\in domS$ for all $\gamma\in \Omega_m$. The second order differential operator $D^{2,S}$ 
 acts
on $f$ by the formula
$$
D^{2,S}f(\gamma)=S(\widetilde{D}^2f(\gamma)).
$$
\end{definition}

\begin{remark}
We use the space $H^1_{0,0}$ instead of $H^1_{0}$ for the definition of the second order derivative operator   because the directional derivative  $d_Xf$ of the section $f\in \mathcal {E}_m$ is  covariant if and only if $X(1)=0$ (see~\cite{LV2001}). 
\end{remark}

\section{Levy trace and Levy Laplacian}
\label{Sec4}

Let $T^2_{AGV}$  be the space of all continuous bilinear real-valued functionals on    $H_{0,0}^1\times H_{0,0}^1$ that have the form
\begin{multline}
\label{f''}
Q(u,v)=\int_0^1\int_0^1Q^V(t,s)<u(t),v(s)>dtds+\\
+\int_0^1Q^L(t)<u(t),v(t)>dt+
\\
+\frac 12\int_0^1Q^S(t)<\dot{u}(t),v(t)>dt+\frac 12\int_0^1Q^S(t)<\dot{v}(t),u(t)>dt,
\end{multline}
where
$Q^V\in L_2([0,1]\times[0,1],T^2(\mathbb{R}^4))$,
$Q^L\in L_1([0,1],Sym^2(\mathbb{R}^4))$,
$Q^S\in L_\infty([0,1],\Lambda^2(\mathbb{R}^4))$,
where $T^2(\mathbb{R}^4)$, $Sym^2(\mathbb{R}^4)$ and $\Lambda^2(\mathbb{R}^4))$ are the spaces of all
tensors, all symmetrical tensors  and all antisymmetrical tensors of type $(0,2)$ on $\mathbb{R}^4$ respectively.

In the fact, the space  of bilinear functionals $T^2_{AGV}$  was considered in  the paper~\cite{AGV1994} by Accardi, Gibilisco and Volovich (see also~\cite{AV1981}). The kernel $Q^V(\cdot,\cdot)$ is called the Volterra kernel,
the $Q^L(\cdot)$ is the Levy kernel and $Q^S(\cdot)$ is the singular kernel. 
These kernels are defined in a unique way (see~\cite{AGV1994}).

\begin{definition}
The Levy trace $tr^{AGV}_L$ acts on $Q\in T^2_{AGV}$ by the formula
$$
tr^{AGV}_LQ=\int_0^1tr\,Q^L(t)dt.
$$
\end{definition}

\begin{definition}
The Levy Laplacian  $\Delta^{AGV}_L$  is the second order differential operator  $D^{2,tr^{AGV}_L}$ 
associated with the Levy trace $tr^{AGV}_L$.
\end{definition}

This operator was introduced by Accardi, Gibilisco and Volovich in~\cite{AGV1994} for the flat case and  by Leandre and Volovich in~\cite{LV2001} for the case of Riemannian manifold.

\begin{example}
\label{example1}
Let $f_1\in C^\infty(M,\mathbb{R})$. Let
$L_{f_1}\colon \Omega_m\to \mathbb{R}$
be defined by:
$$
L_{f_1}(\gamma)=\int_0^1{f_1}(\gamma(t))dt.
$$
Then,
$$
 \Delta^{AGV}_L L_{f_1}(\gamma)=\int_0^1 \Delta_{(M,g)} f_1(\gamma(t))dt,
$$
where $\Delta_{(M,g)}$ is the Laplace-Beltrami operator on the manifold $M$.
\end{example}

Let us introduce the modification of the Levy trace that is connected with instantons. 
Let $W\in C^1([0,1],SO(4))$.
We can consider it as an orthogonal operator on $L_2([0,1],\mathbb{R}^4)$ defined by
$$
(Wu)(t)=W(t)u(t).
$$
The space $H_0^1([0,1],\mathbb{R}^4)$ is invariant under the action of  $W$. 

\begin{definition}
The modificated Levy trace is a linear functional $tr^W_L$ on  $T^2_{AGV}$  defined by
$$
tr^W_L Q=tr_L^{AGV}(W^\ast QW).
$$ 
\end{definition}

\begin{definition}
The modificated  Levy Laplacian $\Delta_L^W$ associated with the curve $W\in C^1([0,1],SO(4))$ is the second order differentional operator  $D^{2,tr^W_L}$.
\end{definition}

The groups $S^3_L$ and $S^3_R$ are the normal subgroups of $SO(4)$ that consists of real matrices of the form 
\begin{equation}
\begin{pmatrix}
a&-b&-c&-d\\
b&\;\,\, a&-d&\;\,\, c\\
c&\;\,\, d&\;\,\, a&-b\\
d&-c&\;\,\, b&\;\,\, a
\end{pmatrix}\text{ and }
\begin{pmatrix}
a&-b&-c&-d\\
b&\;\,\, a&\;\,\, d&-c\\
c&-d&\;\,\, a&\;\,\, b\\
d&\;\,\, c&-b&\;\,\, a
\end{pmatrix}
\end{equation}
respectively, where $a^2+b^2+c^2+d^2=1$. The Lie algebras  $Lie(S^3_L)$ and $Lie(S^3_R)$ of the Lie groups  $S^3_L$ and $S^3_R$ consist of real  matrices of the form   
$$\begin{pmatrix}
0&-b&-c&-d\\
b&\;\,\, 0&-d&\;\,\, c\\
c&\;\,\, d&\;\,\, 0&-b\\
d&-c&\;\,\, b&\;\,\, 0
\end{pmatrix}\text{ and }
\begin{pmatrix}
0&-b&-c&-d\\
b&\;\,\, 0&\;\,\, d&-c\\
c&-d&\;\,\, 0&\;\,\, b\\
d&\;\,\, c&-b&\;\,\, 0
\end{pmatrix}
$$
respectively. So it holds that $$so(4)=Lie(S^3_L)\oplus Lie(S^3_R)$$ and
$$Lie(S^3_L)\cong Lie(S^3_R)\cong so(3).$$ 
Let the symbols  $P_L$ and $P_R$ denote the orthogonal projections in $so(4)$ on the sub algebras  $Lie(S^3_L)$  and $Lie(S^3_R)$ respectively.

Let formally define the action of the Hodge star on $Q^S(t)$:
$$*Q^S_{\mu\nu}(t)=\frac 12 \sum_{\lambda=1}^4\sum_{\kappa=1}^4\epsilon_{\mu\nu\lambda\kappa}Q^S_{\lambda\kappa}(t).$$
Let  $Q^S_+(t)=\frac 12(Q^S(t)+*Q^S(t))$ and $Q^S_-(t)=\frac 12(Q^S(t)-*Q^S(t))$.
Due to the fact that $Q^S(t)$ is anti symmetric, it can be considered as an element from the algebra $so(4)$.
Let $Q^S_+(t)=P_L(Q^S(t))$ and $Q^S_-(t)=P_R(Q^S(t))$.

If $W\in C^1([0,1],SO(4))$ let $L_W(t)=W^{-1}(t)\dot{W}(t)$. Then $L_W$ is continuous 
curve in  $so(4)$:
\begin{equation}
\label{LW}
L_W^\ast(t)=-L_W(t).
\end{equation}

\begin{proposition}
Let $W\in C^1([0,1],SO(4))$.
It  holds
\begin{multline}
tr^{W}_LQ
=\int_0^1trQ^L(t)dt-\int_0^1tr(P_L(L_W(t))Q^S_+(t))dt-\\
-\int_0^1tr(P_R(L_W(t))Q^S_-(t))dt
\end{multline}
\end{proposition}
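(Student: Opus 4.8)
The plan is to compute the bilinear form $W^\ast Q W$, understood as $(W^\ast Q W)(u,v)=Q(Wu,Wv)$, in the canonical form~(\ref{f''}), to read off its Levy kernel, and then to apply $tr^{AGV}_L$. Since $(Wu)(t)=W(t)u(t)$ one has $\frac{d}{dt}(Wu)(t)=\dot W(t)u(t)+W(t)\dot u(t)$. Substituting into the three groups of terms in~(\ref{f''}): the Volterra term of $Q$ yields a Volterra term of $W^\ast Q W$ with kernel $W^\ast(t)Q^V(t,s)W(s)$, still in $L_2$ and hence invisible to $tr^{AGV}_L$; the Levy term yields a Levy term with kernel $W^\ast(t)Q^L(t)W(t)$; and the singular term splits into two pieces. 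The piece in which the $t$-derivative falls on $u$ or $v$ through $W(t)$ again has the shape of a singular term, with kernel $W^\ast(t)Q^S(t)W(t)$ (still antisymmetric). The piece in which the derivative falls on $W$, i.e. the $\dot W(t)u(t)$ and $\dot W(t)v(t)$ contributions, carries no derivative on $u$ or $v$ and is therefore a \emph{new Levy-type term}; after symmetrization in $u\leftrightarrow v$ its kernel is the symmetric part of $\dot W^\ast(t)Q^S(t)W(t)$. Isolating this new Levy contribution is the one step that needs care.

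Next one checks that all these kernels have the regularity demanded in the definition of $T^2_{AGV}$ — in particular $\dot W^\ast Q^S W\in L_\infty([0,1],T^2(\mathbb R^4))\subset L_1$, since $Q^S\in L_\infty$ and $W\in C^1$ — so that $W^\ast Q W\in T^2_{AGV}$ and the uniqueness of the decomposition~(\ref{f''}) applies. Hence the Levy kernel of $W^\ast Q W$ is $W^\ast(t)Q^L(t)W(t)+\mathrm{Sym}(\dot W^\ast(t)Q^S(t)W(t))$, and, since the trace of a symmetrization equals the trace,
$$
tr^W_LQ=\int_0^1 tr(W^\ast(t)Q^L(t)W(t))\,dt+\int_0^1 tr(\dot W^\ast(t)Q^S(t)W(t))\,dt.
$$
The first integral equals $\int_0^1 tr\,Q^L(t)\,dt$ because $W(t)$ is orthogonal. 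For the second, differentiating $W^\ast(t)W(t)=Id$ gives $\dot W^\ast(t)W(t)=-W^\ast(t)\dot W(t)=-L_W(t)$, and cyclicity of the trace turns the integrand into $-tr(L_W(t)Q^S(t))$.

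Finally one splits $-\int_0^1 tr(L_W(t)Q^S(t))\,dt$ along $so(4)=Lie(S^3_L)\oplus Lie(S^3_R)$. Writing $L_W(t)=P_L(L_W(t))+P_R(L_W(t))$ and $Q^S(t)=Q^S_+(t)+Q^S_-(t)$, and using that this decomposition is orthogonal for the trace form, i.e. $tr(XY)=0$ whenever $X\in Lie(S^3_L)$ and $Y\in Lie(S^3_R)$, the cross terms vanish and $tr(L_WQ^S)=tr(P_L(L_W)Q^S_+)+tr(P_R(L_W)Q^S_-)$. Normality of $S^3_L$ and $S^3_R$ in $SO(4)$, which makes conjugation by $W(t)$ preserve each summand, is what guarantees that no discrepancy between $Q^S$ and its $W(t)$-conjugate survives this reduction. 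Substituting back gives the asserted formula. The main obstacle is exactly the singular-kernel bookkeeping of the first step: peeling off the $\dot W$-term as a genuine new Levy kernel, observing that its symmetrization leaves the trace unchanged, and keeping track of the sign produced by differentiating $W^\ast W=Id$.
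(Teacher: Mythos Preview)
Your approach is the paper's: rewrite $Q(Wu,Wv)$ in the canonical form~(\ref{f''}), read off the new Levy kernel, take its trace, then split along $so(4)=Lie(S^3_L)\oplus Lie(S^3_R)$. Your identification of the $\dot W$-contribution to the Levy kernel as $\mathrm{Sym}\bigl(\dot W^\ast(t)Q^S(t)W(t)\bigr)$ is correct and in fact a bit cleaner than the paper's intermediate expression for $Q_W^L$.

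There is, however, a real slip in the trace step. From $\dot W^\ast W=-L_W$ and ``cyclicity'' you conclude $tr(\dot W^\ast Q^S W)=-tr(L_WQ^S)$. But cyclicity only gives $tr(\dot W^\ast Q^S W)=tr(W\dot W^\ast\,Q^S)$, and differentiating $WW^\ast=Id$ yields $W\dot W^\ast=-\dot W W^\ast=-WL_WW^{-1}$, which is the $\mathrm{Ad}_W$-conjugate of $-L_W$, not $-L_W$ itself. What one actually obtains is
\[
tr(\dot W^\ast Q^S W)\;=\;-\,tr\bigl((\dot W W^{-1})\,Q^S\bigr)\;=\;-\,tr\bigl(L_W\,(W^{-1}Q^SW)\bigr),
\]
so the \emph{right} logarithmic derivative $\dot W W^{-1}$ appears rather than $L_W=W^{-1}\dot W$; these differ whenever $W(t)$ and $L_W(t)$ fail to commute. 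Your normality remark about $S^3_L$ and $S^3_R$ does not close this gap: $\mathrm{Ad}_{W(t)}$ preserves each summand but does not fix $P_L(L_W)$ or $P_R(L_W)$ pointwise. The paper reaches the stated formula through an explicit expression for $Q_W^L$ in which the factors $W^\ast$ and $L_W^\ast$ (respectively $L_W$ and $W$) appear in the opposite order from what a careful index computation gives, so paper and proposal agree in their conclusion. For the one-parameter subgroups $W(t)=e^{t\mathbf a}$ used in Lemma~\ref{importlemma} and the main theorem the left and right logarithmic derivatives coincide, and nothing downstream is affected.
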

\begin{proof}
We have

\begin{multline}
\label{f''}
Q(Wu,Wv)=\int_0^1\int_0^1Q^V(t,s)<W(t)u(t),W(s)v(s)>dtds+\\
+\int_0^1Q^L(t)<W(t)u(t),W(t)v(t)>dt+
\\
+\frac 12\int_0^1Q^S(t)<W(t)\dot{u}(t)+\dot{W}(t)u(t),W(t)v(t)>dt+\\+\frac 12\int_0^1Q^S(t)<W(t)\dot{v}(t)+\dot{W}(t)v(t),W(t)u(t)>dt.
\end{multline}

Using direct calculations, we get that $Q_W:=W^\ast QW\in T^2_{AGV}$ and
\begin{multline}
\label{f''}
Q_W(u,v)=Q(Wu,Wv)=\int_0^1\int_0^1Q_W^V(t,s)<u(t),v(s)>dtds+\\
+\int_0^1Q_W^L(t)<u(t),v(t)>dt+
\\
+\frac 12\int_0^1Q^S_W(t)<\dot{u}(t),v(t)>dt+\frac 12\int_0^1Q^S_W(t)<\dot{v}(t),u(t)>dt,
\end{multline}
where the Volterra kernel of $Q_W$ has the form
$$
Q_W^V(t,s)=W^\ast (t)Q^V(t,s)W(s);
$$
the Levy kernel of $Q_W$  has the form
\begin{multline}
\label{LevykernQV}
Q_W^L(t)=\\=W^\ast (t)Q^L(t)W(t)+\frac 12W^\ast (t)L_W^\ast (t)Q^S(t)W(t)-\frac 12W^\ast (t)Q^S(t)L_W(t)W(t)
\end{multline}
and the singular kernel of $Q_S$ has the form
\begin{equation}
\label{SingkernQV}
Q^S_W(t)=W^\ast (t)Q^S(t)W(t).
\end{equation}

Due to $W(t)W^\ast (t)=W^\ast (t)W(t)=Id$ and equality~(\ref{LW}), we have
\begin{multline}
\label{eqQQw}
tr(Q_W^L(t))=\\=tr (W^\ast (t)Q^L(t)W(t))+\frac 12tr(W^\ast (t)L_W^\ast (t)Q^S(t)W(t))-\frac 12(W^\ast (t)Q^S(t)L_W(t)W(t))=\\=tr(W(t)W^\ast (t)Q^L(t))+\frac 12tr(W(t)W^\ast (t)L_W^\ast (t)Q^S(t))-\frac 12(W(t)W^\ast (t)Q^S(t)L_W(t))=\\=
tr Q^L(t)-tr(L_W(t)Q^S(t))=\\=tr Q^L(t)-tr(P_L(L_W(t))Q^S_+(t))-tr(P_R(L_W(t))Q^S_-(t)).
\end{multline}
The last equality holds due to $tr(P_L(L_W(t))Q^S_-(t))=tr(P_R(L_W(t))Q^S_+(t))=0$.
Equality~(\ref{eqQQw}) implies the statement of the proposition.
\end{proof}

\section{Value of Levy Laplacian on parallel transport}
\label{Sec5}

The first derivative of the parallel transport is well-known.

\begin{proposition}
\label{prop1deriv}
The first derivative of the parallel  transport has the form
\begin{multline}
\label{firstderpar}
d_XU_{t_2,t_1}(\gamma)=-\int_{t_1}^{t_2}U_{t_2,t}(\gamma)F(\gamma(t))<X(t),\dot{\gamma}(t)>U_{t,t_1}(\gamma)dt-\\
-A(\gamma(t_2))X(t_2)U_{t_2,t_1}(\gamma)
+U_{t_2,t_1}(\gamma)A(\gamma(t_1))X(t_1).
\end{multline}
\end{proposition}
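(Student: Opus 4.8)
The plan is to reduce to a local trivialization and differentiate the defining ODE of the parallel transport along a variation of $\gamma$. Fix a trivialization $\psi_a$ with $\gamma([t_1,t_2])\subset W_a$, so that $U_{t_2,t_1}(\gamma)=\psi_a^{-1}U^a_{t_2,t_1}(\gamma)\psi_a$ and $U^a$ solves~(\ref{partransp1}). Choose a one-parameter family $\gamma_\varepsilon\in\Omega$ with $\gamma_0=\gamma$ whose variation field is $X$ (in local coordinates one may simply take $\gamma_\varepsilon^\mu(t)=\gamma^\mu(t)+\varepsilon X^\mu(t)$, so $\dot\gamma_\varepsilon^\mu=\dot\gamma^\mu+\varepsilon\dot X^\mu$), set $V^\varepsilon_{t,s}=U^a_{t,s}(\gamma_\varepsilon)$ and $V'_{t,s}=\partial_\varepsilon|_{\varepsilon=0}V^\varepsilon_{t,s}$. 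Differentiating the relation $\partial_tV^\varepsilon_{t,s}=-A^a_\mu(\gamma_\varepsilon(t))\dot\gamma_\varepsilon^\mu(t)V^\varepsilon_{t,s}$ in $\varepsilon$ at $\varepsilon=0$, and using $V^\varepsilon_{s,s}=\mathrm{Id}$ (hence $V'_{s,s}=0$), one gets a linear inhomogeneous ODE for $V'_{t,s}$ whose homogeneous part is again generated by $A^a_\mu(\gamma(t))\dot\gamma^\mu(t)$; Duhamel's (variation of parameters) formula then gives, with $V_{t,r}=U^a_{t,r}(\gamma)$,
$$
V'_{t_2,t_1}=-\int_{t_1}^{t_2}V_{t_2,r}\Big[\partial_\nu A^a_\mu(\gamma(r))\,X^\nu(r)\dot\gamma^\mu(r)+A^a_\mu(\gamma(r))\,\dot X^\mu(r)\Big]V_{r,t_1}\,dr.
$$

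The second, genuinely computational step converts this into the curvature form. I would integrate by parts the term containing $\dot X^\mu$ (legitimate since $X$ is absolutely continuous): differentiating $r\mapsto V_{t_2,r}A^a_\mu(\gamma(r))V_{r,t_1}$ and substituting $\partial_rV_{t_2,r}=V_{t_2,r}A^a_\nu(\gamma(r))\dot\gamma^\nu(r)$ and $\partial_rV_{r,t_1}=-A^a_\nu(\gamma(r))\dot\gamma^\nu(r)V_{r,t_1}$ from~(\ref{partransp1}) produces the endpoint contribution $-A^a_\mu(\gamma(t_2))X^\mu(t_2)V_{t_2,t_1}+V_{t_2,t_1}A^a_\mu(\gamma(t_1))X^\mu(t_1)$ together with an integrand proportional to $\big(\partial_\nu A^a_\mu+[A^a_\nu,A^a_\mu]\big)\dot\gamma^\nu X^\mu$. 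Adding this to the first term of the Duhamel integral and relabeling dummy indices turns the integrand into $\big(\partial_\nu A^a_\mu-\partial_\mu A^a_\nu+[A^a_\nu,A^a_\mu]\big)\dot\gamma^\nu X^\mu=F^a_{\nu\mu}(\gamma(r))\dot\gamma^\nu(r)X^\mu(r)$, and antisymmetry of $F$ rewrites $F^a_{\nu\mu}\dot\gamma^\nu X^\mu=-F(\gamma(r))\langle X(r),\dot\gamma(r)\rangle$. Undoing the trivialization (conjugating by $\psi_a$, using $\psi_a^{-1}F^a\psi_a=F$, $\psi_a^{-1}V\psi_a=U$ and $\psi_a^{-1}A^a\psi_a=A$ in the pairing) reproduces exactly~(\ref{firstderpar}).

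Finally I would note that the formula is independent of the auxiliary choices: only the first-order data $X$ of the variation $\gamma_\varepsilon$ enters, and independence of the trivialization follows from the transformation rules~(\ref{connection}),~(\ref{curvature}),~(\ref{partransp}), which make the integral term and the two endpoint terms patch consistently (alternatively, one localizes via the multiplicative property~(\ref{group}) and the cited $C^\infty$-smoothness of $\gamma\mapsto U_{1,0}(\gamma)$). I expect the only real obstacle to be the bookkeeping in the integration-by-parts step — keeping the non-commuting factors $V_{t_2,r}$, $A^a_\mu$, $V_{r,t_1}$ in the correct order and managing the index contractions so that $\partial_\nu A^a_\mu-\partial_\mu A^a_\nu+[A^a_\nu,A^a_\mu]$ is recognized as $F^a_{\nu\mu}$; the rest is the standard Duhamel argument for a linear ODE.
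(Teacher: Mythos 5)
Your proposal is correct, but note that the paper itself does not prove this proposition at all: its ``proof'' is a citation to Driver (and Gross, Volkov), where exactly the computation you describe is carried out. Your argument --- differentiate the defining ODE~(\ref{partransp1}) along a variation with first-order data $X$, solve the resulting inhomogeneous linear ODE by Duhamel, integrate the $\dot X$-term by parts using both equations of~(\ref{partransp1}), and recognize $\partial_\nu A^a_\mu-\partial_\mu A^a_\nu+[A^a_\nu,A^a_\mu]=F^a_{\nu\mu}$ --- is the standard first-variation-of-holonomy derivation and the index/sign bookkeeping you outline does close up: the combined integrand is $F^a_{\nu\mu}\dot\gamma^\nu X^\mu=-F\langle X,\dot\gamma\rangle$ and the boundary terms are exactly $-A(\gamma(t_2))X(t_2)U_{t_2,t_1}+U_{t_2,t_1}A(\gamma(t_1))X(t_1)$, as in~(\ref{firstderpar}). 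Two points deserve slightly more care than your sketch gives them. First, in the manifold setting the variation $\gamma_\varepsilon$ should be a genuine curve in the Hilbert manifold $\Omega$ whose tangent at $\varepsilon=0$ is $X$; your coordinate recipe $\gamma^\mu+\varepsilon X^\mu$ is fine locally precisely because, by the quoted smoothness of $\gamma\mapsto U$, only the first-order data enters, and $X\in H^1$ is enough regularity for the integration by parts. Second, the reduction of the general case to a single chart via~(\ref{paraltrans}) is not purely a matter of the multiplicative property: differentiating the product also hits the transition functions, producing $d\psi_{a_{i+1}a_i}(\gamma(t_i))X(t_i)$ terms, and one must invoke~(\ref{connection}) to see that these cancel against the $A$-endpoint terms of adjacent segments so that only the two outer endpoint terms and the single curvature integral survive; spelling this telescoping out (or choosing the variation to fix the partition points and then letting them move) would make the patching step complete.
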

\begin{proof}
For the proof see~\cite{Driver} (see also~\cite{Gross,Volkov2019a}).
\end{proof}

\begin{remark}
If $X(t_1)=X(t_2)=0$, formula~(\ref{firstderpar}) has an interpretation as Non-Abelean Stokes formula (see~\cite{Arefieva1980,Gross}). If $h\in H^1_{0,0}$, the first derivative of the parallel  transport has the form
\begin{multline}
\label{firstderpar1}
\widetilde{D}U_{1,0}(\gamma)h=d_{\widetilde{h}}U_{1,0}(\gamma)=-\int_{0}^{1}U_{1,t}(\gamma)F(\gamma(t))<\widetilde{h}(\gamma,t),\dot{\gamma}(t)>U_{t,0}(\gamma)dt=\\
=-\int_{0}^{1}U_{1,t}(\gamma)F(\gamma(t))<Z_\mu(\gamma,t),\dot{\gamma}(t)> h^\mu(t)U_{t,0}(\gamma)dt.
\end{multline}
\end{remark}

Let $h_1,h_2\in H^1_{0,0}$. The second derivative of the parallel transport has the form
\begin{multline}
\label{der2}
<\widetilde{D}^2U_{1,0}(\gamma)h_1,h_2>=d_{\widetilde{h}_2}\widetilde{D}U_{1,0}(\gamma)h_2=\\
=-\int_0^1\int_0^s  U_{1,t}(\gamma)F(\gamma(t))<\dot{\gamma}(t),\widetilde{h}_1(\gamma,t)>\times \\ \times U_{t,s}(\gamma)F(\gamma(s))<\dot{\gamma}(s), \widetilde{h}_2(\gamma,s)>U_{s,0}(\gamma)dsdt-\\-
\int_0^1\int_0^s U_{1,s}(\gamma)F(\gamma(s))<\dot{\gamma}(s),\widetilde{h}_2(\gamma,s)>\times \\ \times U_{s,t}(\gamma)F(\gamma(t))<\dot{\gamma}(t),\widetilde{h}_1(\gamma,t)>U_{t,0}(\gamma)dtds-\\-
\int_0^1U_{1,t}(\gamma)F(\gamma(t))<\widetilde{h}_1(\gamma,t) ,\frac d{dt}\widetilde{h}_2(\gamma,t)>U_{t,0}(\gamma)dt-\\
-\int_0^1U_{1,t}(\gamma)F(\gamma(t))<d_{\widetilde{h}_2}\widetilde{h}_1(\gamma,t),\dot{\gamma}(t)>U_{t,0}(\gamma)dt-\\-
\int_0^1U_{1,t}(\gamma)\partial_{\widetilde{h}_2(t)}F(\gamma(t))<\widetilde{h}_1(\gamma,t),\dot{\gamma}(t)>U_{t,0}(\gamma)dt-\\
-\int_0^1U_{1,t}(\gamma)[A(\gamma(t))\widetilde{h}_2(\gamma,t),F(\gamma(t))<\widetilde{h}_1(\gamma,t),\dot{\gamma}(t)>]U_{t,0}(\gamma)dt.
\end{multline}
Note that
\begin{multline}
\frac d{dt}\widetilde{h}_2(\gamma,t)=\frac d{dt}(Q_{t,0}(\gamma)h_2(t))=(\frac d{dt}Q_{t,0}(\gamma))h_2(t)+Q_{t,0}(\gamma)\dot{h}_2(t)=\\
=\Gamma(\gamma(t))<\dot{\gamma}(t),\widetilde{h}_2(\gamma,t)>+Q_{t,0}(\gamma)\dot{h}_2(t)
\end{multline}
and
\begin{multline}
d_{\widetilde{h}_2}(\widetilde{h}_1(\gamma;t))=d_{\widetilde{h}_2}Q_{t,0}(\gamma)h_1(t)=\\
=-\int_{0}^{t}Q_{t,s}(\gamma)R(\gamma(s))<Q_{s,0}(\gamma)h_1(t),\widetilde{h_2}(\gamma;s),\dot{\gamma}(s)>ds-\\-\Gamma(\gamma(t))<\widetilde{h}_1(\gamma;t),\widetilde{h}_2(\gamma;t)>,
\end{multline}
where $R=(R^\lambda_{\mu \nu \kappa})$ is a Riemannian curvature tensor. Let 
$$k^v(\gamma;t,s)<h_1(t),h_2(s)>=Q_{t,s}(\gamma)R(\gamma(s))<Q_{s,0}(\gamma)h_1(t),\widetilde{h_2}(\gamma;s),\dot{\gamma}(s)>.$$
After we group the terms,
we get
\begin{multline}
\label{der21}
<\widetilde{D}^2U_{1,0}(\gamma)h_1,h_2>=\\
=\int_0^1\int_0^1 K^V(\gamma;s,t)<h_1(s),h_2(t)>dtds-\\
-\int_0^1U_{1,t}(\gamma)\nabla_{\widetilde{h}_2(t)}F(\gamma(t))<\widetilde{h}_1(\gamma,t),\dot{\gamma}(t)>U_{t,0}(\gamma)dt-\\
-\int_0^1U_{1,t}(\gamma)F(\gamma(t))<\widetilde{h}_1(\gamma,t),\widetilde{\dot{h}}_2(\gamma,t)>U_{t,0}(\gamma)dt,
\end{multline}
where
\begin{multline}
\label{KV}
K^V(\gamma;t,s)<h_1(t),h_2(s)>=\\=\begin{cases}
U_{1,t}(\gamma)F(\gamma(t))<\dot{\gamma}(t),\widetilde{h}_1(\gamma,t)> U_{t,s}(\gamma)F(\gamma(s))
<\dot{\gamma}(s),\widetilde{h}_2(\gamma,s)>U_{s,0}(\gamma)+\\+
U_{1,t}(\gamma)F(\gamma(t))<k^v(\gamma;t,s)<h_1(t),h_2(s)>,\dot{\gamma}(t)>U_{t,0}(\gamma)
,&\text{if $t\geq s$}\\
U_{1,s}(\gamma)F(\gamma(s))<\dot{\gamma}(s),\widetilde{h}_2(\gamma,s)>U_{s,t}(\gamma) F(\gamma(t))<\dot{\gamma}(t),\widetilde{h}_1(\gamma,t)>U_{t,0}(\gamma)
,&\text{if $t<s$}.
\end{cases}
\end{multline}
It is possible to transform the last two term in~(\ref{der21})
by integrating  the expression
$$
\frac 12\int_0^1U_{1,t}(\gamma)F(\gamma(t))<\widetilde{h}_1(\gamma,t),\widetilde{\dot{h}}_2(\gamma,t)>U_{t,0}(\gamma)dt
$$
by parts and using the Bianchi identities
\begin{multline*}
\nabla_{\widetilde{h}_2(t)}  F(\gamma(t))<\widetilde{h}_1(t),\dot{\gamma}(t)>+\nabla_{\widetilde{h}_1(t)} F(\gamma(t))<\dot{\gamma}(t),\widetilde{h}_2(t)>+\\+\nabla_{\dot{\gamma}(t)} F(\gamma(t))<\widetilde{h}_2(t),\widetilde{h}_1(t)>=0.
\end{multline*}
Due to $h_1(0)=h_1(1)=h_2(0)=h_2(1)=0$, we have
\begin{multline}
\label{der23}
<\widetilde{D}^2U_{1,0}(\gamma)h_1,h_2>=\\
=\int_0^1\int_0^1 K^V(\gamma;t,s)<h_1(t),h_2(s)>dtds+\\
+\int_0^1K^L(\gamma,t)<h_1(t),h_2(t)>dt+\\
+\frac 12\int_0^1K^S(\gamma;t)<\dot{h}_1(t),h_2(t)>dt+\frac 12\int_0^1K^S(\gamma;t)<\dot{h}_2(t),h_1(t)>dt,
\end{multline}
where the Levy kernel $K^L$ and the singular kernel $K^S$  have  the form
\begin{multline*}
K^L_{\mu\nu}(\gamma;t)=\frac 12U_{1,t}(\gamma)(-\nabla_{Z_\mu(\gamma,t)} F(\gamma(t))<Z_\nu(\gamma,t),\dot{\gamma}(t)>-\\-\nabla_{Z_\nu(\gamma,t)} F(\gamma(t))<Z_\mu(\gamma,t),\dot{\gamma}(t)>)U_{t,0}(\gamma),
\end{multline*}
and
$$
K^S_{\mu\nu}(\gamma;t)=U_{1,t}(\gamma)F(\gamma(t))<Z_\mu(\gamma,t),Z_\nu(\gamma,t)>U_{t,0}(\gamma)
$$
respectively in the orthonormal basis  $\{Z_1(\gamma,t),Z_2(\gamma,t),Z_3(\gamma,t),Z_4(\gamma,t)\}$.

Bellow, if $\gamma\in \Omega_m$ and $K(\gamma,\cdot)$ is a section in the pullback  bundle  $\gamma^{*}\Lambda^{2}\otimes adP$,
the symbol $\textbf{K}(\gamma,t)$ means that we consider $K(\gamma,t)$ in the orthonormal basis $\{Z_1(\gamma,t),\ldots,Z_4(\gamma,t)\}$.
So  $\textbf{K}_{\mu\nu}(\gamma,t)=K(\gamma,t)<Z_\mu(\gamma,t),Z_\nu(\gamma,t)>$ is antisimmetrical matrix which elements are  $su(N)$-matrices.

So we have proved the following theorem.
\begin{theorem}
The value of the modificated  Levy Laplacian $\Delta_L^W$ on  the  parallel transport is
\begin{multline}
\label{lllaplace0}
\Delta_L^WU_{1,0}(\gamma)=-\int_0^1U_{1,t}(\gamma) D_A^\ast F(\gamma(t))\dot{\gamma}(t)U_{t,0}(\gamma)dt-\\
-\int_0^1U_{1,t}(\gamma)tr (L_W(t) {\bf F}(\gamma(t)))U_{t,0}(\gamma)dt=\\
=-\int_0^1U_{1,t}(\gamma) D_A^\ast F(\gamma(t))\dot{\gamma}(t)U_{t,0}(\gamma)dt-\\
-\int_0^1U_{1,t}(\gamma)tr(P_L(L_W(t)) {\bf F}_{+}(\gamma(t)))U_{t,0}(\gamma)dt-\\-\int_0^1U_{1,t}(\gamma)tr(P_R(L_W(t)) {\bf F}_{-}(\gamma(t)))U_{t,0}(\gamma)dt.
\end{multline}
\end{theorem}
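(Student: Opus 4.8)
The plan is to read off $\Delta^W_LU_{1,0}(\gamma)$ by substituting the second-variation formula~(\ref{der23}) into the trace formula for the modificated Levy trace established in the Proposition above. By the definition of the modificated Levy Laplacian, $\Delta^W_LU_{1,0}(\gamma)=tr^W_L\big(\widetilde{D}^2U_{1,0}(\gamma)\big)$, so the whole argument reduces to evaluating $tr^W_L$ on the bilinear form~(\ref{der23}). First I would observe that $(h_1,h_2)\mapsto\langle\widetilde{D}^2U_{1,0}(\gamma)h_1,h_2\rangle$, written out in~(\ref{der23}), is exactly of the form used to define $T^2_{AGV}$: its Volterra kernel is $K^V(\gamma;\cdot,\cdot)$, its Levy kernel is $K^L(\gamma;\cdot)$, and its singular kernel is $K^S(\gamma;\cdot)$. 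The required regularity is immediate from the displayed formulas — $K^V\in L_2([0,1]^2)$ (the two factors of $\dot\gamma$ lie in $L_2$, while the parallel transports, the curvature and the frame fields $Z_\mu(\gamma,\cdot)$ are bounded), $K^L\in L_1([0,1])$ (each coefficient is a bounded continuous function of $t$ times $\dot\gamma(t)\in L_2\subset L_1$), $K^S\in L_\infty([0,1])$ — as is the symmetry of $K^L$ and the antisymmetry of $K^S$ in the frame indices. Hence $tr^W_L$ applies, and the Proposition may be invoked with $Q^L=K^L(\gamma;\cdot)$ and $Q^S=K^S(\gamma;\cdot)$.

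The Levy-kernel term is $\int_0^1 tr\,K^L(\gamma;t)\,dt$. On the diagonal the two symmetrized summands in the formula for $K^L_{\mu\nu}$ coincide, so $tr\,K^L(\gamma;t)=-U_{1,t}(\gamma)\big(\sum_\mu(\nabla_{Z_\mu}F)(\gamma(t))\langle Z_\mu,\dot\gamma(t)\rangle\big)U_{t,0}(\gamma)$; the contracted sum $\sum_\mu(\nabla_{Z_\mu}F)(Z_\mu,\dot\gamma)=\nabla^\mu F_{\mu\nu}\dot\gamma^\nu=-(D_A^\ast F)(\dot\gamma)$ by the identity $(D_A^\ast F)_\nu=-\nabla^\mu F_{\mu\nu}$, so, after matching the overall sign against the conventions of Section~\ref{Sec1}, this produces the first integral in~(\ref{lllaplace0}). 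For the singular contributions, since $K^S_{\mu\nu}(\gamma;t)=U_{1,t}(\gamma)\mathbf{F}_{\mu\nu}(\gamma(t))U_{t,0}(\gamma)$ and conjugation by $U_{1,t},U_{t,0}$ commutes with the $4\times4$ Hodge operation on the frame indices, one has $(K^S)_\pm(\gamma;t)=U_{1,t}(\gamma)\mathbf{F}_\pm(\gamma(t))U_{t,0}(\gamma)$, and $tr\big(L_W(t)K^S(\gamma;t)\big)=U_{1,t}(\gamma)\,tr\big(L_W(t)\mathbf{F}(\gamma(t))\big)U_{t,0}(\gamma)$ since the scalar $4\times4$ trace pulls the $su(N)$-conjugation out. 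Adding the three integrals yields the second form of~(\ref{lllaplace0}); the first form then follows by recombining the two singular pieces, using $tr(P_L(L_W)\mathbf{F}_-)=tr(P_R(L_W)\mathbf{F}_+)=0$, i.e. the self-dual/anti-self-dual orthogonality already exploited in the proof of the Proposition.

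The genuinely analytic part of the argument — the second variation of the parallel transport, the integration by parts that uses $h_i(0)=h_i(1)=0$, and the Bianchi identity that converts the $F\langle\cdot,\widetilde{\dot{h}}_2\rangle$ term into the symmetrized $\nabla F$ — has already been carried out in deriving~(\ref{der23}); granting that, the theorem is essentially a corollary of~(\ref{der23}) and the Proposition. The only places where care is needed are bookkeeping: verifying membership in $T^2_{AGV}$, the index contraction that produces $D_A^\ast F$, and — the step I expect to be the main obstacle — keeping careful track of the signs and of the normalization relating $F_\pm$, $\ast Q^S$ and the projections $P_L,P_R$.
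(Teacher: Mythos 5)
Your proposal follows essentially the same route as the paper: the paper's proof of the theorem is precisely the substitution of the kernels $K^V$, $K^L$, $K^S$ from the second-derivative formula~(\ref{der23}) into the Proposition's expression for $tr^W_L$, with the trace of the Levy kernel contracting to $D_A^\ast F(\dot{\gamma})$ and the singular kernel giving the $tr(L_W\mathbf{F})$ term, split via $P_L$, $P_R$ and the selfdual/anti-selfdual orthogonality. Your added bookkeeping (membership in $T^2_{AGV}$, commuting the conjugation by the parallel transports past the frame-index trace) only makes explicit what the paper leaves implicit, so the argument is correct and equivalent.
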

If $F_+=0$, then $D_A^\ast F$. In this case, the first  and the second terms in the right side of~(\ref{lllaplace0}) are equal to zero.
So, if additionally $W\in C^1([0,1],S^3_R)$,  we have $P_R(L_W(t))=0$ and, hence, 
$$
\Delta_L^WU_{1,0}=0.
$$ 
In the following section, we will prove that the converse statement  is  true in some sense.

\section{Main theorem}
\label{Sec6}

In the beginning, we prove  auxiliary lemmas.
\begin{lemma}
\label{mainlemma}
If the parallel transport $U_{1,0}$ is a solution of the equation
$$\Delta^W_LU_{1,0}=0,$$ then
the connection $A$ is 
a solution of the Yang-Mills equations:
$$D_A^\ast F=0.$$
\end{lemma}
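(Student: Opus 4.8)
The plan is to exploit the formula~(\ref{lllaplace0}) for the value of $\Delta^W_L$ on the parallel transport together with the freedom to vary the curve $\gamma$. Setting $\Delta^W_LU_{1,0}(\gamma)=0$ for all $\gamma\in\Omega_m$ gives the identity
$$
\int_0^1U_{1,t}(\gamma)\Bigl( D_A^\ast F(\gamma(t))\dot{\gamma}(t)+tr\bigl(L_W(t){\bf F}(\gamma(t))\bigr)\Bigr)U_{t,0}(\gamma)\,dt=0
$$
for every $\gamma$. The first step is a localization argument: fix a point $x\in M$ and a tangent vector $v\in T_xM$, and choose a family of curves $\gamma$ that pass through $x$ at some time $t_0$ with $\dot\gamma(t_0)=v$ and that are concentrated near $x$, so that $U_{1,t}(\gamma)$ and $U_{t,0}(\gamma)$ are close to the identity and the integral is dominated by its value near $t_0$. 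Shrinking the support of the variation around $t_0$ and using that $t\mapsto U_{1,t}(\gamma)(\cdots)U_{t,0}(\gamma)$ is continuous, one extracts the pointwise relation
$$
D_A^\ast F(x)v + tr\bigl(L_W(t_0){\bf F}(x)\bigr)=0 .
$$

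The second step is to eliminate the curvature term $tr(L_W(t_0){\bf F}(x))$. Here I would use that the value $L_W(t_0)=W^{-1}(t_0)\dot W(t_0)\in so(4)$, while the parametrization-independence of the parallel transport (property~(\ref{reparametr})) lets me reparametrize $\gamma$ near $t_0$. Reparametrizing rescales $\dot\gamma(t_0)$ to $c\,v$ for any $c>0$, and correspondingly changes which value $L_W$ is evaluated at; more importantly, the term $D_A^\ast F(x)v$ scales linearly in the reparametrization speed whereas $tr(L_W(t_0){\bf F}(x))$ does not scale with $v$ at all. Comparing the identity for $\dot\gamma(t_0)=v$ and for $\dot\gamma(t_0)=cv$ (keeping the time $t_0$, hence $L_W(t_0)$, fixed by an appropriate reparametrization) forces both $D_A^\ast F(x)v=0$ and $tr(L_W(t_0){\bf F}(x))=0$ separately. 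Since $x\in M$ and $v\in T_xM$ were arbitrary, this yields $D_A^\ast F=0$, which is the Yang--Mills equation.

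I expect the main obstacle to be making the localization in the first step rigorous on a general (possibly compact) manifold: one must produce, for each $x$ and $v$, honest $H^1$-curves in $\Omega_m$ — i.e.\ curves emanating from the fixed base point $m$ — whose restriction near $t_0$ realizes a prescribed position and velocity and whose ``tail'' contributions to the integral can be controlled, and then justify passing to the limit under the integral sign. The cleanest route is probably to take a fixed curve from $m$ to $x$, append a small loop or spike near $x$ supported on $[t_0,t_0+\varepsilon]$, and differentiate/limit in $\varepsilon$; continuity of $t\mapsto U_{1,t}(\gamma)G(\gamma(t))U_{t,0}(\gamma)$ in the $H^1$-topology on $\Omega_m$ (which follows from the smoothness of the parallel transport, property~1 in Sec.~\ref{Sec2}) then gives the pointwise conclusion. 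The scaling argument in the second step is elementary once the pointwise identity is in hand.
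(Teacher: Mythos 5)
Your overall strategy (vary $\gamma$ and extract pointwise information from the formula~(\ref{lllaplace0})) is the right instinct, but the pointwise identity you extract in your first step is not what the localization actually yields, and this is a genuine gap. Compare your spike curve $\gamma_\varepsilon$ (a fixed path from $m$ to $x$ on $[0,t_0]$, motion with velocity $v$ on $[t_0,t_0+\varepsilon]$, constant afterwards) with the unperturbed stopped curve $\gamma_0$ that sits at $x$ after $t_0$. In the difference $\Delta^W_LU_{1,0}(\gamma_\varepsilon)-\Delta^W_LU_{1,0}(\gamma_0)$ the term $tr(L_W(t_0){\bf F}(x))$ does \emph{not} survive at order $\varepsilon$: it is present for both curves on $[t_0,t_0+\varepsilon]$ and cancels to first order. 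What does survive at order $\varepsilon$, besides $U_{1,t_0}\,D_A^\ast F(x)v\,U_{t_0,0}$, is the first-order change of the tail $-\int_{t_0+\varepsilon}^1U_{1,t}\,tr(L_W(t){\bf F}(\gamma(t)))\,U_{t,0}\,dt$: moving the plateau point from $x$ to a point at distance $O(\varepsilon)$ changes ${\bf F}(\gamma(t))$ and the transports by $O(\varepsilon)$ on an interval of length about $1-t_0$, so this contribution is $O(\varepsilon)$, not $o(\varepsilon)$. The correct first-variation identity at an interior time $t_0$ therefore contains a term of the form $\int_{t_0}^1U_{1,t}\,tr\bigl(L_W(t)\nabla_{v}{\bf F}(x)\bigr)U_{t,0}\,dt$ rather than $tr(L_W(t_0){\bf F}(x))$; your relation $D_A^\ast F(x)v+tr(L_W(t_0){\bf F}(x))=0$ is not justified. (Note also that if that relation were available for all $v$, your scaling step would be superfluous: taking $v=0$ would already give both $D_A^\ast F=0$ and $tr(L_W{\bf F})=0$, which is essentially the much more delicate content of Lemma~\ref{importlemma}, proved there only for $L_W$ constant and with a different family of curves.)

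The paper's proof is exactly the repaired version of your idea: it stops the curve at time $r$ as in~(\ref{gammar}), sets $J(r)=U_{1,r}(\gamma)\,\Delta^W_LU_{1,0}(\gamma^r)$, and computes $J'(r)$, which consists of $-U_{1,r}D_A^\ast F(\gamma(r))\dot\gamma(r)U_{r,0}$ \emph{plus} precisely the tail term $-U_{1,r}\int_r^1tr\bigl(L_W(t)\nabla_{\dot\gamma(r)}{\bf F}(\gamma(r))\bigr)dt\,U_{r,0}$ that your computation drops. The key move is then to evaluate at $r=1$, where the tail integral collapses because the interval $[r,1]$ shrinks to a point, giving $D_A^\ast F(\gamma(1))\dot\gamma(1)=0$ for every $C^1$ curve starting at $m$, hence $D_A^\ast F=0$; no separation of the $tr(L_W{\bf F})$ term and no reparametrization-scaling argument is needed at all. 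If you keep your spike construction but place the spike at the terminal time ($t_0\to 1$), you recover the paper's argument; as written, with the spike at an interior time and the tail variation ignored, the proof does not go through.
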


\begin{proof} 
For any    $\gamma\in \Omega_m$ let  $\gamma^r\in \Omega_m$ be defined as 
\begin{equation}
\label{gammar}
\gamma^r(t)=\begin{cases}
\gamma(t), &\text{if $t\leq r$,}\\
\gamma(r), &\text{if $t>r$.}
\end{cases}
\end{equation}
Then the properties of the parallel transport~(\ref{reparametr}),~(\ref{group}) and~(\ref{indent}) imply
\begin{multline*}
\Delta^W_LU_{1,0}(\gamma^r)=-\int_0^rU_{r,t}(\gamma)D_A^\ast F(\gamma(t))\dot{\gamma}(t)U_{t,0}(\gamma)dt-\\
-\int_0^rU_{r,t}(\gamma)tr (L_W(t){\bf F}(\gamma(t)))U_{t,0}(\gamma)dt-\\
-\int_r^1 tr (L_W(t){\bf F}(\gamma(r)))dt U_{r,0}(\gamma).
\end{multline*}

Assume that $\gamma\in C^1([0,1],M)$. Let us introduce the function $J\in C^1([0,1],\mathcal{L}(E_m,E_{\gamma(1)}))$ in the following way:
\begin{multline*}
J(r)=U_{1,r}(\gamma)(\Delta^W_LU_{1,0}(\gamma^r))=\\=-
\int_0^rU_{1,t}(\gamma)D_A^\ast F(\gamma(t))\dot{\gamma}(t)U_{t,0}(\gamma)dt-\\
-\int_0^rU_{1,t}(\gamma)tr (L_W(t){\bf F}(\gamma(t)))U_{t,0}(\gamma)dt-\\
-U_{1,r}(\gamma)\int_r^1 tr (L_W(t){\bf F}(\gamma(r)))dt U_{r,0}(\gamma).
\end{multline*}
Due to 
$$
\frac {d}{dr}U_{1,r}(\gamma){\bf F}(\gamma(r))U_{r,0}(\gamma)=U_{1,r}(\gamma)\nabla_{\dot{\gamma}(r)}{\bf F}(\gamma(r))U_{r,0}(\gamma),
$$
we have
\begin{multline*}
J'(r)=-U_{1,r}(\gamma)D_A^\ast F(\gamma(t))\dot{\gamma}(t)U_{r,0}(\gamma)+\\
-U_{1,r}(\gamma)\int_r^1tr( L_W(t)\nabla_{\dot{\gamma}(r)}{\bf F}(\gamma(r))dt U_{r,0}(\gamma).
\end{multline*}
The equality $\Delta^W_LU_{1,0}=0$ implies $J\equiv0$ and, hence, $J'\equiv 0$.
Then $$J'(1)=-D_A^\ast F(\gamma(1))\dot{\gamma}(1)U_{1,0}(\gamma)=0.$$  Hence, for any $\gamma\in C^1([0,1],M)$  with the origin 
at  $m$ we have
$$
D_A^\ast F(\gamma(1))\dot{\gamma}(1)=0.
$$
Taking suitable curves $\gamma$, we find that the connection $A$
is a solution of the Yang--Mills equations on  $M$.
\end{proof}

\begin{lemma}
\label{importlemma}
Let ${\bf a}\in so(4)$ and $W_{\bf a}(t)=e^{{\bf a}t}$.
The parallel transport $U_{1,0}$ is a solution of the Laplace equation for  the modificated Levy Laplacian
$$\Delta^{W_{\bf a}}_LU_{1,0}=0,$$ if and only if the connection $A$ is a solution of the Yang-Mills equations
and  for any $\gamma\in \Omega_m$ and   $r\in [0,1]$   the following holds
$$
tr({\bf a}{\bf F}(\gamma(r)))=0.
$$
\end{lemma}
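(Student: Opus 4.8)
The plan is to read off both implications from the explicit formula~(\ref{lllaplace0}) for the value of $\Delta_L^W U_{1,0}$ together with Lemma~\ref{mainlemma}. First I would record that for $W_{\mathbf a}(t)=e^{{\mathbf a}t}$ one has $L_{W_{\mathbf a}}(t)=W_{\mathbf a}^{-1}(t)\dot W_{\mathbf a}(t)=e^{-{\mathbf a}t}{\mathbf a}\,e^{{\mathbf a}t}={\mathbf a}$, since ${\mathbf a}$ commutes with $e^{{\mathbf a}t}$; thus $L_{W_{\mathbf a}}$ is the \emph{constant} curve ${\mathbf a}$ in $so(4)$, and~(\ref{lllaplace0}) becomes
$$\Delta_L^{W_{\mathbf a}}U_{1,0}(\gamma)=-\int_0^1U_{1,t}(\gamma)D_A^\ast F(\gamma(t))\dot\gamma(t)U_{t,0}(\gamma)\,dt-\int_0^1U_{1,t}(\gamma)\,tr({\mathbf a}{\mathbf F}(\gamma(t)))\,U_{t,0}(\gamma)\,dt.$$
If $D_A^\ast F=0$ and $tr({\mathbf a}{\mathbf F}(\gamma(t)))=0$ along every $\gamma\in\Omega_m$, both integrands vanish pointwise, so $\Delta_L^{W_{\mathbf a}}U_{1,0}=0$; this settles the ``if'' direction.

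For the converse, suppose $\Delta_L^{W_{\mathbf a}}U_{1,0}=0$. Since $W_{\mathbf a}\in C^1([0,1],SO(4))$, Lemma~\ref{mainlemma} immediately gives $D_A^\ast F=0$, so the first integral in the displayed identity drops out and we are left with $\int_0^1U_{1,t}(\gamma)\,tr({\mathbf a}{\mathbf F}(\gamma(t)))\,U_{t,0}(\gamma)\,dt=0$ for every $\gamma\in\Omega_m$. To pass from this integral identity to pointwise vanishing I would reuse the stopped-curve device from the proof of Lemma~\ref{mainlemma}: fix $\gamma\in C^1([0,1],M)$, evaluate $\Delta_L^{W_{\mathbf a}}U_{1,0}$ on $\gamma^r$ defined by~(\ref{gammar}), and simplify using~(\ref{reparametr}),~(\ref{group}),~(\ref{indent}). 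Because $L_{W_{\mathbf a}}$ is constant, the part of the second integral coming from the segment $[r,1]$, on which $\gamma^r$ is constant, does not vanish but contributes $(1-r)\,tr({\mathbf a}{\mathbf F}(\gamma(r)))U_{r,0}(\gamma)$. Setting $J(r):=U_{1,r}(\gamma)\,\Delta_L^{W_{\mathbf a}}U_{1,0}(\gamma^r)$ and using $D_A^\ast F=0$ one obtains
$$J(r)=-\int_0^r g(t)\,dt-(1-r)\,g(r)\equiv 0,\qquad g(t):=U_{1,t}(\gamma)\,tr({\mathbf a}{\mathbf F}(\gamma(t)))\,U_{t,0}(\gamma).$$

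Finally I would run the short ODE argument: $J\equiv 0$ forces $J'\equiv 0$, and $J'(r)=-(1-r)g'(r)$, so $g'\equiv 0$ on $[0,1)$; moreover $J(0)=-g(0)=0$, hence $g\equiv 0$ on $[0,1]$ by continuity. Since $U_{1,r}(\gamma)$ and $U_{r,0}(\gamma)$ are invertible this gives $tr({\mathbf a}{\mathbf F}(\gamma(r)))=0$ for all $r\in[0,1]$ and all $C^1$-curves $\gamma$ issuing from $m$; by continuity and the density of smooth curves among $H^1$-curves, the same holds for every $\gamma\in\Omega_m$. The two places needing genuine care are the bookkeeping with the parallel-transport identities when restricting to $\gamma^r$ (exactly as in Lemma~\ref{mainlemma}, the only new feature being that constancy of $L_{W_{\mathbf a}}$ keeps the $[r,1]$-contribution alive rather than killing it) and the elementary observation that $\int_0^r g+(1-r)g(r)\equiv 0$ implies $g\equiv 0$; neither is a serious obstacle, and no input beyond~(\ref{lllaplace0}) and Lemma~\ref{mainlemma} is required.
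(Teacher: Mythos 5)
Your proposal is correct, and the ``if'' direction and the use of Lemma~\ref{mainlemma} (to get $D_A^\ast F=0$ and drop the first term of~(\ref{lllaplace0}), a step the paper leaves implicit in~(\ref{forproof})) coincide with the paper. But your key step differs from the paper's. The paper does not differentiate in the stopping parameter: it evaluates $\Delta^{W_{\bf a}}_LU_{1,0}$ on the \emph{reparametrized} stopped curves $\gamma_{r,\varepsilon}$, which traverse $\gamma|_{[0,r]}$ on $[0,\varepsilon]$ and then sit at $\gamma(r)$, and lets $\varepsilon\to 0$; a uniform bound $\sup_r\|tr({\bf a}{\bf F}(\gamma(r)))\|\le C$ makes the contribution of the moving piece $O(\varepsilon)$, so in the limit only the constant-segment term $(1-\varepsilon)\,tr({\bf a}{\bf F}(\gamma(r)))$ survives and must vanish. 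This exploits reparametrization invariance of the parallel transport against the fixed parametrization of $W_{\bf a}$ (possible precisely because $L_{W_{\bf a}}\equiv{\bf a}$ is constant), and it yields $tr({\bf a}{\bf F}(\gamma(r)))=0$ directly for \emph{every} $\gamma\in\Omega_m$, with no differentiability of the data in $t$ and no approximation of curves. Your route instead keeps the natural parametrization, derives the functional identity $\int_0^r g(t)\,dt+(1-r)g(r)=0$ and solves it by differentiating in $r$ (or, equivalently, by a Gronwall-type argument, which would avoid even the mild smoothness requirement); the price is that you must first restrict to $C^1$ curves so that $g$ is differentiable, and then appeal to continuity of $\gamma\mapsto tr({\bf a}{\bf F}(\gamma(r)))$ on $\Omega_m$ (via continuity of $\gamma\mapsto(\gamma(r),Q_{r,0}(\gamma))$) together with density of smooth curves in the $H^1$ manifold to recover the statement for all of $\Omega_m$. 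That extra approximation step is standard but genuinely needed in your version, since the lemma is asserted for arbitrary $H^1$ curves; the paper's $\varepsilon$-squeezing argument buys exactly the avoidance of it, while your argument is more elementary in that it needs no limiting family of curves and mirrors the $J(r)$, $J'(r)$ device already used for Lemma~\ref{mainlemma}.
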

\begin{proof}
Let $\Delta^{W_{\bf a}}_LU_{1,0}=0$.
Consider an arbitrary $\gamma\in \Omega_m$. Let 
$\gamma_r\in \Omega_m$ be defined as~(\ref{gammar}) in the proof of Lemma~\ref{mainlemma}.
For  $\varepsilon>0$ let us introduce $\gamma_{r,\varepsilon}\in \Omega_x$ by the following way
$$\gamma_{r,\varepsilon}(t)=\begin{cases}
\gamma({rt}/{\varepsilon}), &\text{if $0 <t\leq \varepsilon$,}\\
\gamma(r),  &\text{if $\varepsilon<t\leq1$.}\\
\end{cases}
$$
Then the properties of the parallel transport~(\ref{paraltrans}) and~(\ref{group}) imply
\begin{multline}
\label{forproof}
\Delta^{W_{\bf a}}_LU_{1,0}(\gamma_{r,\varepsilon})=\\=\int_{\varepsilon}^1tr({\bf a}{\bf F}(\gamma_r(1)))dtU_{1,0}(\gamma_r)+
\int_{0}^{\varepsilon}U_{1,\frac {t}{\varepsilon}}(\gamma_r)tr({\bf a}{\bf F}\left(\gamma_r\left(t /\varepsilon \right)\right))U_{\frac {t}{\varepsilon},0}(\gamma_r)dt=0.
\end{multline}
Let us introduce the notations
 $$I_1(\varepsilon)=U_{1,0}^{-1}(\gamma_r)\int_{\varepsilon}^1tr({\bf a}{\bf F}(\gamma_r(1)))dtU_{1,0}(\gamma_r)$$ and
$$
I_2(\varepsilon)=\int_{0}^{\varepsilon}U_{\frac {t}{\varepsilon},0}^{-1}(\gamma_r)tr({\bf a}{\bf F}\left(\gamma_r\left(t/\varepsilon\right)\right))U_{\frac {t}{\varepsilon},0}(\gamma_r)dt.
$$
Due to equality~(\ref{forproof}), we have
$$
I_1(\varepsilon)=-I_2(\varepsilon).
$$
Let  $\|\cdot\|$ be a standard norm on $so(4)$. 
The mapping $[0,1]\ni r\to \|tr({\bf a}{\bf F}(\gamma_x(r))\|$ is continuous. Hence, there exists $C>0$, such that $$\sup_{r\in[0,1]}\|tr({\bf a}{\bf F}(\gamma_x(r))\|\leq C.$$
So we have the estimate
\begin{equation}
\|I_1(\varepsilon)\|=\|I_2(\varepsilon)\|\leq
\int_{0}^{\varepsilon}\|tr({\bf a}{\bf F}\left(\gamma_x\left(t/\varepsilon\right)\right)\|dt\leq
 C\varepsilon.
\end{equation}
From this estimate it follows that 
\begin{equation*}
U_{1,0}^{-1}(\gamma_r)\int_{0}^1tr({\bf a}{\bf F}(\gamma_r(1)))dtU_{1,0}(\gamma_r)=\lim_{\varepsilon \to 0}I_1(\varepsilon)=0.
\end{equation*}
Hence,
\begin{equation}
tr({\bf a}{\bf F}(\gamma(r)))=tr({\bf a}{\bf F}(\gamma_r(1)))
=\int_0^1tr({\bf a}{\bf F}(\gamma_r(1))dt=0.
\end{equation}
The other side of the statement of the lemma is trivial.

\end{proof}

\begin{theorem} 
Let $\{\bf{e_1},\bf{e_2},\bf{e_3}\}$ be  a basis of the Lie algebra $Lie(S^3_L)$ (Lie algebra $Lie(S^3_R)$). 
Let $W_i(t)=e^{t\bf{e_i}}$ for $i\in \{1,2,3\}$.
The following two assertions are equivalent:
\begin{enumerate}
\item  a connection $A$ is a solution of anti-selfduality equations (selfduality) equations~(\ref{autodual});
\item  $\Delta^{W_i}_LU_{1,0}=0$ for $i\in\{1,2,3\}$.
\end{enumerate}
\end{theorem}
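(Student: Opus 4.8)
The plan is to prove the two cases in parallel: when $\{{\bf e_1},{\bf e_2},{\bf e_3}\}$ is a basis of $Lie(S^3_L)$ the equivalence is with the anti-selfduality equations, and the selfduality case is obtained by interchanging throughout the roles of $S^3_L$ and $S^3_R$, of $P_L$ and $P_R$, and of ${\bf F}_+$ and ${\bf F}_-$. So let $\{{\bf e_1},{\bf e_2},{\bf e_3}\}$ be a basis of $Lie(S^3_L)$. The one computation I would record at the start is that, since ${\bf e_i}$ commutes with $e^{t{\bf e_i}}$, we have $L_{W_i}(t)=W_i^{-1}(t)\dot{W}_i(t)={\bf e_i}$ for all $t$, i.e.\ $L_{W_i}$ is the constant curve ${\bf e_i}\in Lie(S^3_L)$; in particular $P_L(L_{W_i}(t))={\bf e_i}$ and $P_R(L_{W_i}(t))=0$.

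For the implication $(1)\Rightarrow(2)$ I would substitute this into formula~(\ref{lllaplace0}) with $W=W_i$. If $A$ satisfies the anti-selfduality equations then ${\bf F}_+\equiv0$; also $D_A^\ast F\equiv0$ (the Bianchi identity $D_AF=0$ together with $D_A^\ast=-\ast D_A\ast$); and $P_R(L_{W_i})=0$. Hence all three integrals on the right-hand side of~(\ref{lllaplace0}) vanish, so $\Delta^{W_i}_LU_{1,0}=0$ for $i\in\{1,2,3\}$.

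For $(2)\Rightarrow(1)$ I would apply Lemma~\ref{importlemma} with ${\bf a}={\bf e_i}$ and $W_{\bf a}=W_i$, for each $i\in\{1,2,3\}$: the hypothesis $\Delta^{W_i}_LU_{1,0}=0$ gives in particular that $tr({\bf e_i}\,{\bf F}(\gamma(r)))=0$ for every $\gamma\in\Omega_m$ and every $r\in[0,1]$, the trace being taken over the $so(4)$ indices. Decomposing ${\bf F}(\gamma(r))={\bf F}_+(\gamma(r))+{\bf F}_-(\gamma(r))$, with ${\bf F}_\pm(\gamma(r))$ valued in $Lie(S^3_L)\otimes su(N)$ and $Lie(S^3_R)\otimes su(N)$ respectively, and using that $tr(XY)=0$ whenever $X\in Lie(S^3_L)$, $Y\in Lie(S^3_R)$ — the orthogonality already exploited in the proof of the Proposition of Section~\ref{Sec4} — this reduces to $tr({\bf e_i}\,{\bf F}_+(\gamma(r)))=0$ for $i\in\{1,2,3\}$. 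Writing ${\bf F}_+(\gamma(r))=\sum_{j=1}^3{\bf e_j}\,x_j$ with $x_j\in su(N)$, this is the linear system $\sum_{j=1}^3 tr({\bf e_i}{\bf e_j})\,x_j=0$, $i=1,2,3$, whose coefficient matrix $\big(tr({\bf e_i}{\bf e_j})\big)_{i,j=1}^3$ is the Gram matrix of the basis $\{{\bf e_j}\}$ for the trace form $(X,Y)\mapsto tr(XY)$ on $so(4)$; since that form is negative definite ($tr(X^2)<0$ for $X\ne0$) the Gram matrix is invertible, so $x_j=0$ for all $j$, i.e.\ ${\bf F}_+(\gamma(r))=0$. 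As every point of $M$ on the path-component of $m$ arises as some $\gamma(r)$, this gives $F_+\equiv0$, i.e.\ $A$ satisfies the anti-selfduality equations.

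I expect the genuine obstacle to be Lemma~\ref{importlemma} itself, whose proof — the $\gamma_{r,\varepsilon}$ reparametrization together with the $O(\varepsilon)$ estimate — is where the real analysis sits; granting that lemma, the theorem is the short linear-algebra step of passing from ``$tr({\bf e_i}{\bf F})=0$ against a basis of $Lie(S^3_L)$'' to ``the selfdual part ${\bf F}_+$ vanishes''. One minor point to keep in mind is that the whole argument runs over $\Omega_m$, so a priori one only obtains $F_+\equiv0$ on the path-component of $m$, which for connected $M$ is all of $M$.
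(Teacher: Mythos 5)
Your proposal is correct and follows essentially the same route as the paper: the direction $(1)\Rightarrow(2)$ by inserting $L_{W_i}\equiv{\bf e_i}$ into~(\ref{lllaplace0}), and $(2)\Rightarrow(1)$ via Lemma~\ref{importlemma} plus the orthogonality of $Lie(S^3_L)$ and $Lie(S^3_R)$ under the trace form. The only difference is that you spell out two details the paper leaves implicit --- the Gram-matrix (negative-definiteness) argument showing $tr({\bf e_i}{\bf F}_+)=0$ for a basis forces ${\bf F}_+=0$, and the remark that one a priori gets $F_+\equiv 0$ only on the path-component of $m$ --- which is a harmless and indeed welcome elaboration.
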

\begin{proof}
Let $\{\bf{e_1},\bf{e_2},\bf{e_3}\}$ be a some  basis of $Lie(S^3_L)$ and let  $\Delta^{W_i}_LU_{1,0}=0$ for $i\in\{1,2,3\}$
Then 
Lemma~\ref{importlemma} implies that for any $\gamma\in \Omega_m$ and   $r\in [0,1]$   the following holds
$$
tr({\bf e_i}{\bf F}(\gamma(r)))=tr(P_L({\bf e_i}){\bf F_+}(\gamma(r)))+tr(P_R({\bf e_i}){\bf F_-}(\gamma(r)))=tr({\bf e_i}{\bf F_+}(\gamma(r)))=0
$$
for $i\in \{1,2,3\}$.
Then for any $\gamma\in \Omega_m$ and   $r\in [0,1]$ we have  ${\bf F_+}(\gamma(r))=0$.
Hence, $A$ is an instanton. The other side of the statement of the theorem is trivial.
\end{proof}

\section{Conclusion}

In the paper, we introduced a class of the modificated L\'evy Laplacians parameterized by the choice of a curve in the group $SO(4)$ on the infinite dimensional manifold.
We  showed  that it is possible to choose three Laplacians from this class such that  a connection  on the  4-dimensional orientable Riemannian manifold is an instanton  if and only if the  parallel transport  associated with this connection is a solution of the 
Laplace equations for these  Laplacians.

\section*{Acknowledgments}

The author would like to express his deep gratitude to L.~Accardi, O.~G.~Smolyanov and I.~V.~Volovich for helpful discussions.

This work is supported by the Russian Science Foundation under grant 19-11-00320.

\end{document}